\newtheorem{theorem}{Theorem}
\newtheorem{definition}[theorem]{Definition}
\newtheorem{lemma}[theorem]{Lemma}
\newtheorem{proposition}[theorem]{Proposition}
\newtheorem{corollary}[theorem]{Corollary}
\newcounter{rem}
\newcommand{\mc}[1]{\mathcal{#1}}
\def\>{\rangle}
\def\<{\langle}
\newcommand{\borb}[2]{\left | #1 \rangle\!\langle #2 \right |}
\renewcommand{\rho}{\varrho}
\def\textbf#1{{\bf #1}}
\def\beq{\begin{equation}}
\def\eeq{\end{equation}}
\def\beqa{\begin{eqnarray}}
\def\eeqa{\end{eqnarray}}
\def\eea{\end{array}}
\def\bea{\begin{array}}
\newcommand{\bei}{\begin{itemize}}
\newcommand{\eei}{\end{itemize}}
\newcommand{\bee}{\begin{enumerate}}
\newcommand{\eee}{\end{enumerate}}
\def\bep{\begin{proposition}}
\def\eep{\end{proposition}}
\def\bel{\begin{lemma}}
\def\eel{\end{lemma}}
\def\bet{\begin{theorem}}
\def\eet{\end{theorem}}
\def\bed{\begin{definition}}
\def\eed{\end{definition}}
\definecolor{cgreen}{RGB}{26, 199, 76}
\newcommand{\cris}[1]{{\color{black} #1}}
\definecolor{violeta}{cmyk}{0.07,0.90,0,0.34}
\newcommand{\bo}[1]{{\color{black} #1}}
\begin{document}
\title{Small violations of Bell inequalities for multipartite pure 
random states}

\author{Cristhiano Duarte}
\email{cduarte@iip.ufrn.br}
\thanks{These two authors contributed equally.}
\affiliation{International Institute of Physics, Federal University of Rio 
Grande do Norte, 59078-970, P. O. Box 1613, Natal, Brazil}
\affiliation{Departamento de Matem\'{a}tica, Instituto
de Ci\^{e}ncias Exatas, Universidade Federal de Minas Gerais, CP
702, CEP 30123-970, Belo Horizonte, MG, Brazil.}

\author{Raphael C. Drumond}
\email{raphael@mat.ufmg.br}
\thanks{These two authors contributed equally.}
\affiliation{Departamento de Matem\'{a}tica, Instituto
de Ci\^{e}ncias Exatas, Universidade Federal de Minas Gerais, CP
702, CEP 30123-970, Belo Horizonte, MG, Brazil.}

\author{Roberto I. Oliveira}
\email{rimfo@impa.br}
\affiliation{Instituto Nacional de Matem\'atica Pura e Aplicada-IMPA 
Estrada 
Dona 
Castorina, 110, Jardim Bot\^anico 22460-320, Rio de Janeiro, RJ, Brazil}

\date{\today}


\pacs{03.65.Ta, 03.65.Ud, 02.10.Ox}
\begin{abstract}
For any finite number 
of parts, measurements and outcomes in a Bell scenario we estimate 
the 
probability of random 
$N$-qu$d$it pure states to substantially violate any Bell 
inequality with \cris{uniformly} bounded coefficients. We prove that under some 
conditions on the 
local dimension the probability to find any significant amount of violation goes to 
zero exponentially fast as the number of parts goes to infinity. \cris{In addition, 
we also prove that} if the 
number of parts is at least $3$, this probability also goes to zero \cris{as the  
the local Hilbert space dimension goes to infinity}.
\end{abstract}


\maketitle
\section{Introduction}

In his seminal paper~\cite{Bell66}, John Bell proved that, under certain assumptions, there exist 
predictions of quantum theory incompatible with those of any physical theory 
described by a Local Hidden Variable (LHV) 
Model~\cite{Loubenets12,Loubenets17,BCPSW13}. Roughly speaking, 
these are models in which outcomes of spacelike separated 
measurements 
are independent conditionally on the knowledge of an underlying hidden 
variable. Bell's key contribution was to show that LHV models must satisfy certain linear 
inequalities on joint probabilities that are now called \emph{Bell 
Inequalities}, which quantum mechanics might violate. This is
one of most important results in quantum physics~\cite{Cabello96} so far, with deep 
implications to our knowledge of the world~\cite{VB14,RBG14,AMO09} and growing interest in 
practical 
applications~\cite{BCPSW13,Vicente14,Barrett05b}.  
It is thus natural to ask how difficult it is to construct a 
quantum experimental setup that violates one of 
these inequalities~\cite{Aspect75,Delft-free-15,Delft-free-16,FC72,Winter14}.

The present paper answers one variant of 
that question. Using probabilistic 
techniques~\cite{GHJPV2015,GLPV17,PalazuelosYin15,Palazuelos15,Loubenets12,Loubenets17,
AS04,BrietVidick13,LNBR10}, and generalizing previous results of the 
authors in~\cite{DO12}, we are able to find an upper-bound for the 
typical behaviour of optimal violations for any Bell scenario $\Gamma=(N,m,v)$. We also discuss
the effect of the size of the local dimension $d$ \cite{GHJPV2015} on the 
probability that a $N$-partite $d$-dimensional quantum system violates any 
Bell inequality associated with this scenario. We find that\cris{, under some 
conditions,} typical pure states \emph{do not} 
produce large violations of Bell inequalities, in spite of the fact that they are likely to be
highly entangled~\cite{BCPSW13, Loubenets17}. This apparent paradox sheds light on 
the important difference (not always made clear) between the 
concepts of entanglement~\cite{HHHH09} and Bell non-locality~\cite{BCPSW13}.

The paper is organized as follows. Section~\ref{sec:BellInequalities} 
is devoted to set up some notation and definitions that will be used 
in the remainder of the paper. In Section~\ref{sec:SmallProbabilitiesHighViolations} we 
discuss our main result, in the form of Theorem~\ref{Thm:TheTheorem}, and its consequences. In 
Section~\ref{sec:TheProof} and 
Section~\ref{sec:PuttingAllTog} we give all key 
ingredients necessary to prove our main theorem. Section~\ref{sec:Conclusion} 
presents open questions, further works as well as our conclusions.


\section{Bell Inequalities}
\label{sec:BellInequalities}

\subsection{Basic Definitions}
\label{subsec:BellInequalitiesDefinitions}

As our starting point in the paper, we consider multipartite 
device-independent, or 
black-box, scenarios~\cite{BCPSW13,PSV16}. That is, we consider a 
general correlation 
scenario, denoted by 
	\begin{equation}
        \Gamma:=(N,m,v).
        \label{eq:DefScenario}
	\end{equation}
The triple $\Gamma$ describes a scenario with the following characteristics:
\begin{itemize}
\item  $N$ black-boxes are distributed among $N$ players;
\item each of these boxes 
admits $m$ different inputs; and
\item for each input, among all $v$ possible 
distinct outputs, only one outcome is observed given that choice of 
input~\footnote{Notice that there is no loss of generality in supposing that every 
box has access to the same number $m$ of inputs as well as to the same number $v$ of 
outputs.This abstraction could be realized 
considering extra buttons that are never 
pushed and additional light bulbs that are never turned on.}. 
\end{itemize}

Without having direct access to \cris{all} internal details of each box, the best 
description one 
has for $\Gamma$ is through the correlations among inputs and outputs. This means that 
such 
experimental scenarios \cris{should} be described 
by a vector, $\vec{p} \in \mathds{R}^{(vm)^N}$ usually called 
\emph{behaviour}~\cite{BCPSW13,BorisTsirelson93}. The components of \cris{such a} 
vector are given by numbers
\begin{equation}
p_{a_1,...,a_N,x_1,...,x_N}=P(a_1,...,a_N|x_1,...,x_N),
\end{equation}
corresponding to the joint probability of obtaining the outcome list 
$(a_1,...,a_N) \in [v] \times ... \times [v]$ given the inputs $(x_1,...,x_N) 
\in [m] \times ... \times [m]$. 

An \emph{admissible behaviour} is a vector $\vec{p} \in 
\mathds{R}^{(mv)^{N}}$ that satisfies the following constraints:
\begin{align}
 &P(a_1,...,a_N|x_1,...,x_N) \in [0,1], \forall \,\, a_1,...,a_N, \forall \,\, 
x_1,...,x_N \label{eq:defConstraintsInBehavioursPosit} \\
 &\sum_{a_1,...,a_N}P(a_1,...,a_N|x_1,...,x_N)=1, \,\, \forall \,\, x_1,...,x_m.
 \label{eq:defConstraintsInBehavioursSum}
\end{align}
We let $\mathcal{B}_{\Gamma}$ denote the set of \emph{all admissible behaviours} associated with 
the correlation scenario $\Gamma=(N,m,v)$.

\subsection{Non-Signalling and Local Hidden Variable Constraints}

A behaviour $\vec{p} \in \mc{B}_{\Gamma}$ satisfies the \emph{non-signaling 
constraint} when for 
all choices of subsets $\mc{I} \subset [N]$, say $\mc{I}=\{i_1,...,i_k\}$, one has
\begin{equation}
  \sum_{\substack{a_i \\ i \in \mc{I}}}P(a_1,...,a_N|x_1,...,x_N)= 
P(a_{i_1},...,a_{i_k}|x_{i_1},...,x_{i_k})=\sum_{\substack{a_i \\ i \in 
\mc{I}}}P(a_1,...,a_N|x_1^{\prime},...,x_N^{\prime}), \forall \,\, a_{i_1},...,a_{i_k}
\label{eq:DefNonSignalling}
\end{equation}
for all possible inputs $(x_1,...,x_N)$ and $(x_1^{\prime},...,x_N^{\prime})$ with 
$x_{i_1}=x_{i_1}^\prime$, $\dots$, $x_{i_k}^\prime = x_{i_k}$. The set of all such 
behaviours is denoted by $\mc{NS}$.

 We say that a behaviour $\vec{p} \in 
\mathds{R}^{(vm)^{N}}$ admits a \emph{local hidden variable} (LHV) model when there 
exists a probability space $\mc{P}=(\Omega,\Sigma,\mu)$, and response functions 
\begin{align}
p(a_i|x_i, \cdot): \Omega &\longrightarrow [0,1] \nonumber \\
                   \omega &\mapsto p(a_i|x_i, \omega)
\end{align}
such that, for every input and output one has~\footnote{For 
more details, including historical remarks, we refer to 
references~\cite{BCPSW13,Loubenets12,Loubenets17,PalazuelosYin15,Palazuelos15}} :
\begin{equation}
 P(a_1,...,a_N|x_1,...,x_N)= \int_{\Omega}p(a_1|x_1, \omega)p(a_2|x_2, 
\omega)...p(a_N|x_N, \omega)\mu(d\omega).
 \label{eq:DefLocalConstraints}
\end{equation}
The set of all behaviours $\vec{p} \in \mc{B}_{\Gamma}$ which admit a LHV model is 
called \emph{local set} and denoted by $\mc{L}$. 
\cris{Notice that any admissible behaviour which admits a LHV model 
also satisfies the non-signalling constraints. In other words, the set of LHV models 
is contained inside the set of non-signalling behaviours.}

In this paper we only consider scenarios in which the number of parts, the number of inputs and the 
number 
of outcomes are all finite. Therefore, we can assume~\cite{BCPSW13} that the underlying 
probability space is also finite, and both $\mc{L}$ and $\mc{NS}$ 
are polytopes~\cite{LexSch03,Pitowsky91}, the former with number 
of vertices $a=v^{mN}$ and affine dimension $D=(mv)^N$.

\subsection{Bell Inequalities (with bounded coefficients)}
\label{sec:bellineq}

A \emph{Bell inequality} is defined through a linear functional $T$ \cris{acting on} 
a behaviour \cris{$\vec{p}$}:
\begin{equation}
T(\vec{p})=\sum_{\substack{ a_1,...,a_N \\ x_1,...,x_N}} T_{a_1,...,a_N|x_1,...,x_N} 
P(a_1,...,a_N|x_1,...,x_N),
 \label{eq:DefBellFunc}
\end{equation}
where $T_{a_1,...,a_N|x_1,...,x_N}$ are real numbers. \cris{Defining} 
$\Delta_{i}=\text{inf}_{\vec{p}\in\mc{L}}T(\vec{p})$ and 
$\Delta_{u}=\text{sup}_{\vec{p}\in\mc{L}}T(\vec{p})$, then
\begin{align}
&\sum_{\substack{ a_1,...,a_N \\ x_1,...,x_N}} T_{a_1,...,a_N|x_1,...,x_N} 
P(a_1,...,a_N|x_1,...,x_N)\leq \Delta_{u},\text{ and } \label{bellineqa}\\
 &\sum_{\substack{ a_1,...,a_N \\ x_1,...,x_N}} T_{a_1,...,a_N|x_1,...,x_N} 
P(a_1,...,a_N|x_1,...,x_N)\geq \Delta_{i},\label{bellineqb}
\end{align}
\cris{are tight \emph{Bell inequalities}, in the sense that they are satisfied by all
local behaviours and, in addition, there exists at least one local behaviour 
saturating each bound}. For the sake of simplicity in defining the 
\emph{degree of violation} of such inequalities (see Eq.~\ref{eq:DefDegreeOfViolation} below), but 
without loss of 
generality, we assume 
throughout the paper that 
\begin{equation}
\text{max}\{|\Delta_{i}|,|\Delta_{u}|\}=1.
\end{equation}

We denote the set of all linear functionals defining Bell inequalities (for a fixed scenario) by 
$\mc{T}$. For any $b>0$, we denote by $\mc{T}_{b}$ the subset of $\mc{T}$ where all coefficients 
(components of the linear functional) are bounded by $b$, that is
\begin{equation}
 \mc{T}_{b}=\{T\in\mc{T}:|T_{a_1,...,a_N| x_1,...,x_N}|\leq b \text{ for all 
}a_{1},...,a_{N},x_{1},...,x_{N}\}.\label{boundedcoefs}
\end{equation}

One the one hand, whatever the value that $b$ assumes, $\mc{T}_{b}$ will always be a 
proper 
subset of $\mc{T}$. Indeed, due to the constraints satisfied by the components of behaviours, one 
can insert terms with arbitrarily large coefficients to the functional~\eqref{eq:DefBellFunc} 
without ever altering its value on behaviours. 

On the other hand, as we will argue below, using again these same constraints, one can show that 
every Bell inequality~\eqref{bellineqa}~and~\eqref{bellineqb} defined by a functional is 
equivalent~\footnote{We say that two Bell inequalities are equivalent if a behavior violate one of 
them if and only if it also violate the other. } to one whose functional only has non-negative 
coefficients and belongs to 
$\mc{T}_{1}$. This possibility is often tacitly 
assumed~\cite{CSW10,CSW14,RDLTC14,CGLMP02,CabelloPRA16,Sliwa03} and many notable inequalities 
are of this kind. For instance, all three~\cite{SBBC13} pentagonal 
Bell 
inequalities
\begin{align}
 I^{P}_{1}&=P(00|00) + P(11|01) + P(10|11) + P(00|10) + P(11|00) \leq 2; \\
 I^{P}_{2}&=P(00|00) + P(11|01) + P(10|11) + P(00|10) + P(\_ 1|\_ 0) \leq 2; \\
 I^{P}_{3}&=P(00|00) + P(11|01) + P(10|11) + P(00|10) + P(11|20) \leq 2
 \label{eq:PentBellIneqs}
\end{align}
as well as a symmetric version of the simplest tight Bell inequality (after CHSH)
violated by quantum theory~\cite{Foissart81,CG04,BG2008}:
\begin{align}
I_{3322} =& P(00|01) + P(00|02) + P(00|10) + P(00|12) + P(00|20) + P(00|21) \\
          &+ P(01|11) + P(10|11) + P(11|11) + P(01|22) + P(10|22) + P(11|22) \\
          &+ P(1\_|0\_) + P(1\_|1\_) + P(\_1|\_0) + P(\_1|\_1) \leq 6,
\end{align}
are examples of it (upon normalizing the classical bound).

\bo{In addition, recently the authors in~\cite{CSW10,CSW14,RDLTC14} have made 
a heavy use of the fact that any Bell inequality can be rewritten as an equivalent 
inequality with positive coefficients lying in $\mathcal{T}_{b}$, with $b>0$. Starting from a Bell 
inequality with positive coefficients they associate a graph $\mathcal{G}$ representing that 
inequality. Each event is associated 
with a vertex, and two vertices are connected whenever their underlying events are 
exclusive~\cite{RDLTC14}. The weight of each vertex in that representation being given 
by the positive coefficient associated with the underlying event in the Bell 
inequality. Having constructed such a weighted-graph, the authors have shown that $i)$ 
the classical bounded for the original inequality is given by $\alpha(\mathcal{G})$ 
the independence number of $\mathcal{G}$, that $ii)$ the maximal quantum violation is 
$\vartheta(\mathcal{G})$ the Lovász number of $\mc{G}$, and that $iii)$ the maximum 
value admissible for more general probabilistic theories is $\alpha^{\ast}(\mc{G})$ 
the fractional-packing number of $\mc{G}$. It shows that such rewritings are not only aesthetic 
tricks, they do have provided fruitful results for the analysis of violations of Bell 
inequalities.}

Indeed, suppose we are given a Bell inequality~\eqref{bellineqa} and assume $\Delta_{u}>0$. 
If
\begin{equation}
\mathcal{I}=\{(a_1,...,a_N,x_1,...,x_N) 
\in [v]^{N}\times[m]^{N}: \,\, T_{a_1,...,a_N \vert 
x_1,...,x_N} < 0 \},
\end{equation}
for each element $(a_{1}^{\prime},...,a_{N}^{\prime},x_{1}^{\prime},...,x_{N}^{\prime}) \in 
\mathcal{I}$ we can use the fact that probabilities are normalized to write
\begin{equation}
P(a_{1}^{\prime},...,a_{N}^{\prime}| 
x_1,...,x_N)=1 \,\,\, - \sum_{a_1 \neq a_{1}^{\prime},...,a_N \neq 
a_{N}^{\prime}}P(a_1,...,a_N|x_1,...,x_N),\end{equation} and change all terms with
negative coefficients appearing in Eq.~\eqref{bellineqa} by a sum of terms 
with positive coefficients plus some negative constants. Leaving all terms 
with probabilities on the \emph{l.h.s} of the inequality and all constants on the 
\emph{r.h.s}, the new inequality has
classical bound $\Theta=\Delta_{u} - \sum_{(a_1,...,a_N \vert 
x_1,...,x_N) \in \mathcal{I}}T_{a_1,...,a_N \vert 
x_1,...,x_N}>0$. Dividing both sides by $\Theta$, we obtain an 
inequality
\begin{equation}
 \sum_{\substack{ a_1,...,a_N \\ x_1,...,x_N}} \tilde{T}_{a_1,...,a_N|x_1,...,x_N} 
P(a_1,...,a_N|x_1,...,x_N) \leq 1,
 \label{eq:DefBellIneqPositiveBounded}
\end{equation}
whose coefficients are non-negative and, therefore, bounded by 
$1$ (otherwise it would be possible to construct a LHV theory
violating the inequality). A similar trick can be done if $\Delta_{u}<0$ and to 
inequality~\eqref{bellineqb}.

\subsection{Degree of violations by quantum states}

A test of a Bell inequality~\eqref{bellineqa}~or~\eqref{bellineqb} can be designed for 
a quantum 
system by choosing $m$ POVM's with $v$ outcomes, $\{\Pi_{a_k,x_k}^{k}\}_{a_k=1}^{v}$, for each 
part $k$. We denote by $A$ the collection of these POVM's and define the so-called Bell 
operator
\begin{equation}
 \mathfrak{B}_{T,A}:= \sum_{\substack{ 
a_1,...,a_N \\ x_1,...,x_N}} T_{a_1,...,a_N|x_1,...,x_N}\Pi_{a_1,x_1}^{1} \otimes 
... \otimes \Pi_{a_N,x_N}^{N}.
 \label{eq:BellOperators}
\end{equation}
Moreover, we denote by $\mc{A}$ the set of all possible POVM's choices for a fixed scenario 
and $N$-fold tensor 
product of Hilbert spaces. A (possible) violation for the Bell inequality is then evaluated through 
the function:
\begin{equation}
 Q(\ket{\psi},T,A)=\text{Tr}(\mathfrak{B}_{T,A} \borb{\psi}{\psi}).
 \label{eq:AlmostBornRule}
\end{equation}
Whenever $|Q(\ket{\psi},T,A)|>1$ one of Bell inequalities with coefficients $T$ is 
violated by 
the local measurements described by $A$. \cris{For that reason, it makes sense that 
the}  \emph{degree of 
violation}~\cite{Loubenets12,Loubenets17,PalazuelosYin15,Palazuelos15} should be defined 
by the quantity 
\begin{equation}
|Q(\ket{\psi},T,A)|. 
\label{eq:DefDegreeOfViolation} 
\end{equation}

It will be crucial for our results the bounds on quantum violations of Bell inequalities 
found by Loubenets in~\cite{Loubenets17}, which guarantees that
\begin{equation}
 |Q(\ket{\psi},T,A)|\leq (2m-1)^{N} \label{cotaloubenets}
\end{equation}
\cris{for all $T$ and $A$ associated with a fixed scenario, and all $\ket{\psi}$ 
of any $N$-fold tensor product Hilbert space.}


\section{Small Probabilities of High Violations}
\label{sec:SmallProbabilitiesHighViolations}

It is already known~\cite{BCPSW13,Loubenets17,Palazuelos15} that 
if either the local dimension $d$ of each quantum system, or the number $N$ of parts, 
is sufficiently large, then typical quantum states are highly entangled. Should we expect, then, 
that typically there will be at least one Bell inequality that is greatly violated? 
\cris{Our main objective in the present work is, then, to approach the following question:}

\begin{center}
 \textit{given a typical pure state $\ket{\psi}$ composed by $N$ $d-$dimensional 
quantum systems, what should one expect for its largest possible violation over all relevant Bell 
inequalities in a given scenario $\Gamma=(N,m,v)$?}
\end{center}
In our framework, optimal violations (if any) of Bell inequalities  
exhibited by a quantum state $\ket{\psi} \in (\mathds{C}^{d})^{\otimes N}$ are given by the 
functional (see Eq.~\eqref{eq:AlmostBornRule}):
\begin{equation}
 V_{\text{opt}}(\ket{\psi}):= \sup_{\substack{A \in 
\mc{A} \\ T \in \mc{T}_{b} }}  |Q(\ket{\psi},T,A)|,
\label{eq:DefFunctionalOptimizing}
\end{equation}
where the supremum is taken over  all 
quantum implementations of all Bell inequalities whose coefficients are \emph{uniformly 
bounded by $b$}. 

{\textbf{Remark: }Our restriction to optimize the violations over 
$\mc{T}_{b}$ has 
basically a technical motivation: we need to assume it in order to successfully apply our 
techniques. On the one hand, as we have pointed out in Sec~\ref{sec:bellineq}, it holds that for any 
fixed $b>1$, every possible Bell inequality is 
equivalent to one obtained by a functional in $\mc{T}_{b}$. At first sight this fact apparently 
makes irrelevant our restriction. On the other hand, for a fixed quantum state, its optimal 
\emph{degree} of violation may be different even for two \emph{equivalent} Bell inequalities (as 
one can test on simple examples). We do not know, for instance, how to rule out the existence of a 
family of pure quantum states for $N$ parties whose optimal violations only 
takes place on Bell inequalities whose coefficients becomes arbitrary large with $N$. The fact is 
we are not sure of such a restriction to bounded coefficients is relevant or not. However, even if 
it is, we recall from the discussion in section~\ref{sec:bellineq} that we still would be able to 
encompass a huge class of relevant Bell inequalities.}

Now, if $\ket{\psi}$ is a point of a sample space, since $V_{\text{opt}}$ is a 
function of 
$\ket{\psi}$ we 
can consider it as a random variable. Formally, what we would 
like to estimate is
the distribution function of such variable, that is: 
\begin{equation}
\mathds{P}(V_{\text{opt}} > c).
\end{equation}
It is known~\cite{PR92,CASA11} that any entangled 
$N$-partite pure state violate \emph{some} Bell inequality. Since these states have full 
measure on the sphere of pure states, its is clear then that $\mathds{P}(V_{\text{opt}} > c)=1$ for 
$c\leq 1$. For arbitrary $c>1$ we have our main result:
\begin{theorem}
 Given $N,d\geq 2$ integers. Let $\ket{\psi} \in \left( 
\mathds{C}^d \right)^{\otimes N}$ be a unit vector distributed according to the uniform 
measure in the sphere $S_{2d^{N}-1}$ of $\left( \mathds{C}^d \right)^{\otimes N}$, then:
\begin{align}
 \mathds{P}(V_{\text{\emph{opt}}}>c) \leq 4  
 \left[\frac{8bN(mv)^Nd^2}{\delta}+2\right]^{mvNd^{2}+(mv)^N} \times 
e^{-\left(\frac{2d^{N}(c-\delta-1)^2}{36\pi^{2}(2m-1)^{2N-2}}\right)},
\label{eq:MainInequality}
\end{align}
for any $b,\delta>0, \,\, c>\delta+1$. 
\label{Thm:TheTheorem}
\end{theorem}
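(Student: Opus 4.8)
The plan is to bound the tail probability of $V_{\text{opt}}$ via a concentration-of-measure argument combined with an $\varepsilon$-net over the compact parameter space that determines $Q(\ket{\psi},T,A)$. First I would fix $\ket{\psi}$ and observe that, for fixed measurement choice $A$ and functional $T$, the map $\ket{\psi}\mapsto Q(\ket{\psi},T,A) = \tr(\mathfrak{B}_{T,A}\borb{\psi}{\psi})$ is Lipschitz on the sphere $S_{2d^N-1}$ with a Lipschitz constant controlled by the operator norm $\|\mathfrak{B}_{T,A}\|$, which in turn is bounded uniformly using Loubenets' estimate~\eqref{cotaloubenets}, $\|\mathfrak{B}_{T,A}\|\le\mathbb{E}|Q|\le(2m-1)^N$ (or a comparable bound). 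By Lévy's lemma / concentration on the sphere, for each \emph{fixed} $(T,A)$ the random variable $Q(\ket{\psi},T,A)$ concentrates around its mean (which is small, of order $d^{-N}$ after tracing out, since the mean state is maximally mixed), so $\mathds{P}(|Q(\ket{\psi},T,A)|>c)$ decays like $\exp(-\Omega(d^N c^2/(2m-1)^{2N}))$. This produces the exponential factor $e^{-2d^N(c-\delta-1)^2/(36\pi^2(2m-1)^{2N-2})}$ in~\eqref{eq:MainInequality}.

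The difficulty is that $V_{\text{opt}}$ is a supremum over the continuum of all $(T,A)$, so a union bound is not directly available. The plan is therefore to discretize: build a finite $\varepsilon$-net $\mathcal{N}$ in the joint parameter space of POVMs $A\in\mc{A}$ and bounded functionals $T\in\mc{T}_b$. The POVM parameters live in a bounded region of dimension roughly $mvNd^2$ (each of the $N$ parties has $m$ inputs, each an effect operator in $d\times d$ Hermitian matrices with $v$ outcomes), and the functional $T$ has $(mv)^N$ real coordinates each bounded by $b$; a standard volumetric bound gives $|\mathcal{N}|\le\left(\tfrac{8bN(mv)^Nd^2}{\delta}+2\right)^{mvNd^2+(mv)^N}$, matching the bracketed prefactor. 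I would then show the key \emph{stability} estimate: if $(T',A')$ is $\varepsilon$-close to $(T,A)$ in this parametrization, then $|Q(\ket{\psi},T,A)-Q(\ket{\psi},T',A')|\le\delta$ uniformly in $\ket{\psi}$, using multilinearity of the Bell operator in the POVM elements and linearity in $T$, together with the normalization $\|\Pi\|\le1$ and $|T|\le b$. This lets me replace $\sup_{(T,A)}$ by $\max_{(T,A)\in\mathcal{N}}$ at the cost of shifting the threshold from $c$ to $c-\delta$.

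Putting it together: $\mathds{P}(V_{\text{opt}}>c)\le\mathds{P}\bigl(\max_{(T,A)\in\mathcal{N}}|Q(\ket{\psi},T,A)|>c-\delta\bigr)\le|\mathcal{N}|\cdot\max_{(T,A)\in\mathcal{N}}\mathds{P}(|Q(\ket{\psi},T,A)|>c-\delta)$ by a union bound. Applying the per-point concentration inequality with threshold $c-\delta$ — and absorbing the mean, which is $\le1$ in the relevant regime, into the ``$-1$'' term so the effective deviation is $c-\delta-1$ — yields exactly~\eqref{eq:MainInequality}, with the leading constant $4$ coming from handling real and imaginary parts (or the two-sided tail) of $Q$ and from the precise form of Lévy's lemma. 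The main obstacle I expect is the stability/Lipschitz step: one must carefully track how perturbing each of the $N$ tensor factors of $\Pi_{a_1,x_1}^1\otimes\cdots\otimes\Pi_{a_N,x_N}^N$ propagates through the sum defining $\mathfrak{B}_{T,A}$ without the error blowing up faster than the net resolution, and one must choose the net granularity (the $\delta$-dependence inside the bracket) so that the discretization error and the concentration threshold shift are both controlled by the same $\delta$. The dimension count for the POVM space and the correct normalization of the net radius are the places where the explicit constants $8bN(mv)^N d^2$ and the exponents $mvNd^2+(mv)^N$ get pinned down.
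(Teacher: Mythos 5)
Your proposal follows essentially the same route as the paper: an $\varepsilon$-net over the joint $(T,A)$ parameter space of dimension $mvNd^{2}+(mv)^{N}$, a uniform stability (Lipschitz) estimate in $(T,A)$ to shift the threshold from $c$ to $c-\delta$, L\'evy's lemma applied to each fixed net point with the mean bounded by $1$ (because the average over $\ket{\psi}$ replaces the state by the maximally mixed state, which satisfies the inequality --- note the mean is \emph{not} of order $d^{-N}$ as you first claim, but your later use of the bound $\leq 1$ is the correct and needed one), and finally a union bound, with the factor $4$ coming from the two-sided tail together with the constant in L\'evy's lemma. This is the paper's proof in all essentials, so only the bookkeeping of the explicit Lipschitz constants ($\Lambda=2(2m-1)^{N-1}$ in $\ket{\psi}$ and $\lambda=4bN(mv)^{N}d^{2}$ in $(T,A)$) remains to pin down the stated constants.
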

This theorem allows us to answer the question posed at the beginning of this section  
negatively. A typical state $\ket{\psi}$ composed by $N$ $d$-dimensional quantum systems, with $N$ 
and/or $d$ large enough, \emph{does not} exhibit a significant degree of violation for any 
Bell inequality {(with uniformly bounded coefficients)}. 

In fact, note that it is possible to rewrite Eq.~\ref{eq:MainInequality} as follows 
(see Appendix~\ref{sec:cotaprob}):
\begin{small}
   \begin{align}
   \mathds{P}(V_{\text{opt}}>c) \leq 4 \exp \left\lbrace mvN^{2}d^{2}\log(mv) 
+mvNd^{2}\log\left(\frac{16bNd^{2}}{\delta} 
\right)  \right. + & (mv)^{N}\log\left(\frac{16Nd^{2}}{\delta}\right) + 
N(mv)^{N}\log(bmv)
\nonumber \\
& \left.     
 -\frac{(c-\delta-1)^{2}(2m-1)^2}{18\pi^{2}}\left[\frac{d} { 
(2m-1)^2 
} 
\right]^{N}\right\rbrace. \label{cotadetalhada}
\end{align}
\end{small}
Now, on the one hand, if the local dimension $d$ of each subsystem satisfies
\begin{equation}
d>mv(2m-1)^{2}, 
\end{equation}
and if in addition the uniform bound $b$ is not too large, \emph{e.g.} 
if 
\begin{equation}
 b=\mc{O}\left((mn)^{N}\right),
 \label{eq:BoundForMGamma}
\end{equation}
then the fifth term in
brackets dominates all other terms. So, we are left with:
\begin{equation}
\mathds{P}(V_{\text{opt}}>c)\rightarrow 0
\end{equation}
super-exponentially fast as $N \rightarrow \infty$. Consequently:
\begin{center}
\emph{if the 
local dimension $d$ of a $N$-partite quantum system satisfies $d>mv(2m-1)^{2}$, then, for large 
$N$, the vast majority of pure states will not violate any Bell inequality with 
{bounded} coefficients to any significant degree.}
\end{center}

{
On the other hand, assume that $N\geq 3$ and that all parameters, except $d$ are 
fixed. Hence, as $d$ becomes arbitrarily 
large, we also see from Eq.~\eqref{cotadetalhada} that the probability \cris{of 
finding a violation} goes to zero. Putting into words:
\begin{center}
\emph{for any $N\geq 3$, if the local dimension $d$ is large enough, the vast majority of pure 
states will not violate any Bell inequality with bounded coefficients to any significant degree.}
\end{center}

To sum up, we can formally state:
\begin{corollary}
Let $\ket{\psi} \in \left( 
\mathds{C}^d \right)^N$ be a unit vector distributed according to the uniform 
measure in the sphere $S_{2d^{N}-1}$ of $\left( \mathds{C}^d \right)^N$. Given 
integers $N\geq 2$ and $d\geq2$, and given $b>0$ the following statements below hold 
true:
\begin{itemize}
 \item [a)]  If $d$ the local dimension satisfies $d>vm(2m-1)^{2} $, then 
\begin{equation}\mathds{P}(V_{\text{\normalfont{opt}}}>c)\rightarrow 0, 
\,\,\, \mbox{as}\,\,\, N \rightarrow \infty.\end{equation}
 \item [b)] If $N$ the number of parts satisfies $N \geq 3$, then:
\begin{equation}
\mathds{P}(V_{\text{\normalfont{opt}}}>c)\rightarrow 0, \,\, \mbox{as} \,\,\, 
d\rightarrow \infty.
\end{equation}
\end{itemize}
\label{coro:CorollaryForPositiveBellIneqs}
\end{corollary}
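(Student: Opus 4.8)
The plan is to prove the quantitative bound of Theorem~\ref{Thm:TheTheorem}; the Corollary then follows by elementary asymptotics applied to the equivalent form \eqref{cotadetalhada} of \eqref{eq:MainInequality}. For part (a), hold $d,m,v,b,c,\delta$ fixed (with $d>vm(2m-1)^{2}$) and let $N\to\infty$: in the exponent of \eqref{cotadetalhada} the negative term $(c-\delta-1)^{2}(2m-1)^{2}[d/(2m-1)^{2}]^{N}/(18\pi^{2})$ grows strictly faster than the largest of the remaining terms, which grows like $N(mv)^{N}$, so the bound goes to $0$ super-exponentially. For part (b), hold $N,m,v,b,c,\delta$ fixed (with $N\ge3$) and let $d\to\infty$: that same term is $\Theta(d^{N})$ while all the others are $O(d^{2}\log d)$, so it dominates exactly when $N\ge3$ — which is precisely why part (b) assumes $N\geq3$ (for $N=2$ the $d^{2}\log d$ term wins and the estimate is vacuous).

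So the real content is Theorem~\ref{Thm:TheTheorem}, and its engine is concentration of measure on the sphere combined with a union bound over a net in the space of pairs $(T,A)$. Fix $T\in\mc T_{b}$ and a collection $A$ of POVMs, and view $\ket\psi\mapsto Q(\ket\psi,T,A)=\bra\psi\mathfrak{B}_{T,A}\ket\psi$ as a function on $S_{2d^{N}-1}$. First I would identify its average over the uniform measure: since the average of $\proj{\psi}$ is $\idty/d^{N}$,
\[
\mathbb{E}\,Q(\ket\psi,T,A)=\frac{\tr\mathfrak{B}_{T,A}}{d^{N}}=\sum_{\substack{a_1,\dots,a_N\\ x_1,\dots,x_N}}T_{a_1,\dots,a_N|x_1,\dots,x_N}\prod_{k=1}^{N}\frac{\tr\Pi^{k}_{a_k,x_k}}{d}\,,
\]
and because each $\Pi^{k}_{a_k,x_k}$ is positive semidefinite with $\sum_{a_k}\Pi^{k}_{a_k,x_k}=\idty$, the numbers $q^{k}(a_k|x_k):=d^{-1}\tr\Pi^{k}_{a_k,x_k}$ form a probability distribution in $a_k$ for each $k,x_k$; thus $\prod_k q^{k}(a_k|x_k)$ is a product — hence local — behaviour, and the normalization $\max\{|\Delta_i|,|\Delta_u|\}=1$ forces $|\mathbb{E}\,Q(\ket\psi,T,A)|\le1$. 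Next, $\mathfrak{B}_{T,A}$ is Hermitian, so Loubenets' inequality \eqref{cotaloubenets} gives $\|\mathfrak{B}_{T,A}\|_{\mathrm{op}}\le(2m-1)^{N}$, hence $\ket\psi\mapsto Q(\ket\psi,T,A)$ is Lipschitz on $S_{2d^{N}-1}$ with a constant controlled by $(2m-1)^{N}$. L\'evy's lemma then delivers, for every $\varepsilon>0$,
\[
\mathbb{P}\big(|Q(\ket\psi,T,A)|>1+\varepsilon\big)\le 2\exp\!\Big(-\frac{2d^{N}\varepsilon^{2}}{36\pi^{2}(2m-1)^{2N-2}}\Big)\,,
\]
the explicit constant being the bookkeeping of L\'evy's bound together with a sharpened estimate of the gradient of $\bra\psi\mathfrak{B}_{T,A}\ket\psi$.

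It remains to remove the dependence on $(T,A)$. Each of the $Nmv$ POVM elements is a $d\times d$ positive semidefinite matrix of operator norm at most $1$, so jointly they admit an $\eta$-net of cardinality $(Cd^{1/2}/\eta)^{Nmvd^{2}}$, and $T$ ranges over a box of side $2b$ in $\mathbb{R}^{(mv)^{N}}$ with an $\eta'$-net of cardinality $(Cb/\eta')^{(mv)^{N}}$. Since $(T,A)\mapsto\mathfrak{B}_{T,A}$ is Lipschitz — a coordinatewise $\eta'$-perturbation of $T$ moves it by $\le(mv)^{N}\eta'$ in operator norm, and an $\eta$-perturbation of every POVM element moves it by $\le bN(mv)^{N}\eta$ — choosing $\eta,\eta'\sim\delta/(bN(mv)^{N})$ makes $|Q(\ket\psi,T,A)-Q(\ket\psi,T',A')|<\delta$ uniformly in $\ket\psi$ whenever $(T',A')$ is the nearest net point. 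Therefore $V_{\mathrm{opt}}(\ket\psi)>c$ forces $|Q(\ket\psi,T',A')|>c-\delta$ at some net point, and a union bound of the previous display with $\varepsilon=c-\delta-1$ over the whole net — of size at most $[\,8bN(mv)^{N}d^{2}/\delta+2\,]^{mvNd^{2}+(mv)^{N}}$ after absorbing the $d$-factors into a common base — yields \eqref{eq:MainInequality}, the prefactor $4=2\times2$ accounting for the two-sided tail and a harmless doubling of the net count.

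The step I expect to be the main obstacle is exactly this last piece of bookkeeping: one has to keep the Lipschitz constant of $(T,A)\mapsto\mathfrak{B}_{T,A}$ — which a priori carries factors $b$, $(mv)^{N}$ and $N$ — from forcing an over-fine, hence over-large, net; bound the covering numbers of the POVM sets in operator norm through their natural $d^{2}$-dimensional Hermitian parametrisation (the source of the $d^{2}$ in the exponent); and make sure the slack $\delta$ swallowed by the net does not erode the deviation $c-\delta-1$ that powers the Gaussian tail — all while keeping every constant explicit enough to land precisely on \eqref{eq:MainInequality}. By contrast the concentration step itself is routine, once the average has been recognised as the value of $T$ on a local behaviour and the operator-norm bound \eqref{cotaloubenets} has been invoked.
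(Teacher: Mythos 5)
Your proposal is correct and follows essentially the same route as the paper: the Corollary is obtained exactly as in the text, by reading off the competing growth rates in the rearranged bound \eqref{cotadetalhada} — the negative term $\propto[d/(2m-1)^{2}]^{N}$ dominating $N(mv)^{N}$ when $d>mv(2m-1)^{2}$ for part (a), and $\Theta(d^{N})$ dominating $\Theta(d^{2}\log d)$ precisely when $N\geq 3$ for part (b). Your sketch of Theorem~\ref{Thm:TheTheorem} (mean bounded by $1$ via the maximally mixed state, Lipschitz control from Eq.~\eqref{cotaloubenets}, L\'evy's lemma, and a union bound over $\epsilon$-nets for $\mc{T}_{b}\times\mc{A}$) likewise mirrors the paper's argument, differing only in inessential bookkeeping of the net construction.
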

}

So, in spite the fact that 
typically any $N$-partite pure state, with large $N$ and/or $d$, is highly entangled, the 
typical value of the Bell violation is extraordinarily small. 


\section{The Proof}
\label{sec:TheProof}

\subsection{Idea of the proof}

We first decompose the event $\{V_{\text{\normalfont{opt}}}>c\}$ into a union of pieces 
corresponding to each possible choice of POVM's and Bell inequality coefficients, as well as 
in violations from above and violations from below. This gives:
\begin{align}
\mathds{P}\left(\ket{\psi}:  \sup_{\substack{A \in 
\mc{A} \\ T \in \mc{T}_{b} }}  |Q(\ket{\psi},T,A)|   
 > c\right)&=\mathds{P}\left(\left[\bigcup_{\substack{A \in 
\mc{A} \\ T \in \mc{T}_{b} }}  
\{\ket{\psi}:Q(\ket{\psi},T,A) > c\}\right]\bigcup \left[\bigcup_{\substack{A \in 
\mc{A} \\ T \in \mc{T}_{b} }}  
\{\ket{\psi}:Q(\ket{\psi},T,A) <-c\}\right] \right) \label{EqSimplea}\\
&=2\mathds{P}\left(\bigcup_{\substack{A \in 
\mc{A} \\ T \in \mc{T}_{b} }}  
\{\ket{\psi}:Q(\ket{\psi},T,A) > c\}\right).
\label{eq:EqSimple}
\end{align}
The first equality comes from the fact that the events considered are the same. In the disjoint in 
the \emph{r.h.s} of Eq.~\eqref{EqSimplea}, the fact that $Q$ is linear in $T$ guarantees that 
the two events have equal measure, which gives the second equality. Even though the event 
appearing on~\eqref{eq:EqSimple} is given by an 
infinite union of sets, the strategy is to replace it by a \emph{finite} union without, however, 
changing \cris{too} much its probability. This replacement is done trough suitable 
$\epsilon$-nets for 
$\mc{A}$ and $\mc{T}$. We then just 
apply the union bound, together with a uniform bound on the 
probability distribution for the degree of violation for any \emph{fixed} inequality 
with functional $T$ 
and measurement settings $A$. This strategy results in the bound of Theorem~\ref{Thm:TheTheorem} 
which has two terms as distinguished below:
\begin{align}
 \mathds{P}(V_{\text{\normalfont{opt}}}>c) \leq  
 4   
 \underbrace{\left[ \frac{8bNd^{2}(mv)^{N}}{\delta} + 2 
\right]^{mvNd^{2}}}_{(A)}  
\times \underbrace{\text{exp}{\left(-\frac{2d^{N}(c-\delta-1)^2}{36\pi^{2}(2m-1)^{2N-2}}
\right) } } _ { (B) }  ,
\label{eq:MainInequalityIdeaOfTheProof}
\end{align}
The term $(A)$ in Eq.~\eqref{eq:MainInequalityIdeaOfTheProof} is just an estimate of the number 
of points of the $\epsilon-$net. Term $(B)$ is the uniform bound mentioned above. It results 
basically from Lévy's 
lemma~\cite{MLedoux01} together with a pair of results on the smoothness of the $Q$ 
function.

\subsection{Some lemmas}
\label{subsec:GeneralArguments}

Note that $Q(\ket{\psi}, A, T )$ is a function of three variables: $i)$ $\ket{\psi}$, belonging to 
a sample space, $ii)$ $A \in \mc{A}$, which can be conveniently seen as an 
element of a metric space, as we are going to discuss later on subsequent subsections, and $iii)$ 
$T \in \mc{T}$ also belonging to a suitable metric set. 

The first step is show the possibility of replacing the infinite sets 
$\mc{A}$ and $\mc{T}$ by finite ``representative'' subsets of them, without changing much the 
optimal violation. This can be made precise by the following lemma.

\begin{lemma}\label{lemma1}
Let $\varphi:\Omega\times M\rightarrow \mathds{R}$ be such that $\Omega$ is a
sample space and $M$ a metric space. For a given $\delta>0$, suppose that there 
exists a finite $M_{\delta}^{\prime}\subseteq M$ such
that for every $x\in M$ there is $x'\in M_{\delta}^{\prime}$ where 
$|\varphi(\omega,x)-\varphi(\omega,x')|<\delta$ for every
$\omega\in\Omega$. Then, for every $c>0$, with $c-\delta>0$, we have 
\begin{equation}\mathds{P}\left(\omega:
\sup_{x\in M}\{\varphi(\omega,x)\}>c\right)\leq \sum_{x'\in
M_{\delta}^{\prime}}\mathds{P}\left(\omega:\varphi(\omega,x')>c-\delta\right).
\label{cota01}
\end{equation}
\end{lemma}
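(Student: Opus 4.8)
The plan is to prove Lemma~\ref{lemma1} by a direct set-theoretic inclusion followed by the union bound. First I would fix $\delta>0$ and a $c$ with $c-\delta>0$, and consider an arbitrary point $\omega$ in the event $\{\sup_{x\in M}\varphi(\omega,x)>c\}$. The key observation is that, since $M_\delta'$ is finite and $\varphi$ is bounded below on it for each fixed $\omega$ (being a finite set of reals), the supremum over $M$ can be compared to a maximum over the net. Concretely, if $\sup_{x\in M}\varphi(\omega,x)>c$, then there exists $x\in M$ with $\varphi(\omega,x)>c$ (or a sequence approaching the sup, but either way we can pick $x$ with $\varphi(\omega,x)>c$ since $c$ is strictly below the sup). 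Taking the corresponding $x'\in M_\delta'$ guaranteed by the hypothesis, we get $\varphi(\omega,x')>\varphi(\omega,x)-\delta>c-\delta$, hence $\max_{x'\in M_\delta'}\varphi(\omega,x')>c-\delta$.

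This establishes the inclusion of events
\begin{equation}
\left\{\omega:\sup_{x\in M}\varphi(\omega,x)>c\right\}\subseteq\bigcup_{x'\in M_\delta'}\left\{\omega:\varphi(\omega,x')>c-\delta\right\}.
\end{equation}
Taking probabilities of both sides and applying monotonicity together with the union bound (subadditivity of $\mathds{P}$ over the finite index set $M_\delta'$) yields
\begin{equation}
\mathds{P}\left(\omega:\sup_{x\in M}\varphi(\omega,x)>c\right)\leq\sum_{x'\in M_\delta'}\mathds{P}\left(\omega:\varphi(\omega,x')>c-\delta\right),
\end{equation}
which is exactly~\eqref{cota01}.

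The only subtlety worth spelling out is the passage from ``$\sup>c$'' to ``there exists $x$ with $\varphi(\omega,x)>c$'': this is immediate from the definition of supremum, since if every $x$ satisfied $\varphi(\omega,x)\leq c$ then $c$ would be an upper bound and the sup would be $\leq c$. No measurability issue arises on the right-hand side because $M_\delta'$ is finite, so the union is a finite union of (presumably measurable) events; one may simply assume each $\{\omega:\varphi(\omega,x')>c-\delta\}$ is measurable, as is implicit in writing $\mathds{P}$ of it. I do not anticipate a genuine obstacle here — the lemma is a soft ``$\epsilon$-net plus union bound'' statement — so the main thing to be careful about is stating the hypothesis usage cleanly (the bound $|\varphi(\omega,x)-\varphi(\omega,x')|<\delta$ must hold \emph{for every} $\omega$, which is what lets us pull the net out before integrating). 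The real work of the paper will be in the later sections, where one must actually exhibit such finite nets $M_\delta'$ for $\mc{A}$ and $\mc{T}_b$ with controlled cardinality, and supply the Lévy-concentration bound on $\mathds{P}(Q(\ket{\psi},T,A)>c-\delta)$ for fixed $(T,A)$.
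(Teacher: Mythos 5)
Your proof is correct and follows exactly the paper's own argument: pick $\omega$ in the event, find $x$ with $\varphi(\omega,x)>c$, replace it by a net point $x'\in M_\delta'$ to get $\varphi(\omega,x')>c-\delta$, deduce the set inclusion into $\bigcup_{x'\in M_\delta'}\{\omega:\varphi(\omega,x')>c-\delta\}$, and finish with the union bound. No issues.
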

\begin{proof}

Take $\omega_*\in \Omega$ such that $\sup_{x\in M}\{\varphi(\omega_*,x)\}>c$. There exists then 
a $x\in M$ with $\varphi(\omega_*,x)>c$. Replacing $x$ by some $x^\prime\in M^\prime_\delta$ with 
$|\varphi(\omega_*,x) - \varphi(\omega_*,x^\prime)|<\delta$, we obtain 
$\varphi(\omega_*,x^\prime)>c-\delta$. Therefore, we have the following set inclusion.
\begin{equation}\{ \omega:\sup_{x\in M}\{\varphi(\omega,x)\}>c \} \subseteq \bigcup_{x'\in 
M_{\delta}^{\prime} }\{\omega:\varphi(\omega,x')>c-\delta\}. \label{ineqnec}\end{equation}The lemma 
now follows from a union bound on the set on the right hand side of the inclusion.\end{proof}

For our problem we will construct such a finite set based on an 
$\epsilon$-net~\cite{HE87,BrietVidick13} of a hypercube. Letting 
$||x||_{\infty}=\text{max}_{i=1,..n}|x_{i}|$,
for $x=(x_{1},...,x_{n})\in \mathds{R}^{n}$, we can state the following:
\begin{lemma}\label{lemma2}
Given $n\in\mathds{N}$, a subset $X$ of $[-1,1]^{n}$ and $0<\epsilon<1$, 
there
exists a finite subset $N_{\epsilon}\subset X$ such that for every 
$x\in X$ 
there exists $x'\in N_{\epsilon}$ with $||x-x'||_{\infty}<\epsilon.$ Moreover, 
\begin{equation}
|N_{\epsilon}|<\left(\frac{2}{\epsilon}+2\right)^{n}.
\end{equation}
\end{lemma}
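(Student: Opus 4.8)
The plan is to construct $N_{\epsilon}$ as a maximal $\epsilon$-separated subset of $X$ and then bound its cardinality by a standard volume-packing argument.

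First I would take $N_{\epsilon}=\{x^{(1)},\dots,x^{(k)}\}\subseteq X$ maximal with respect to the property that $||x^{(i)}-x^{(j)}||_{\infty}\geq\epsilon$ whenever $i\neq j$ (existence by a greedy construction, which terminates once we know $k$ is finite, or by Zorn's lemma). Maximality immediately yields the desired covering property: if some $x\in X$ satisfied $||x-x^{(i)}||_{\infty}\geq\epsilon$ for every $i$, we could adjoin $x$ to $N_{\epsilon}$ and contradict maximality; hence every $x\in X$ admits an $x'\in N_{\epsilon}$ with $||x-x'||_{\infty}<\epsilon$. Note that by construction $N_{\epsilon}\subseteq X$, which is what makes this net usable inside Lemma~\ref{lemma1} (where we need $M_{\delta}^{\prime}\subseteq M$).

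For the cardinality estimate I would introduce the open $\ell_{\infty}$-cubes $C_{i}:=\{y\in\mathds{R}^{n}:||y-x^{(i)}||_{\infty}<\epsilon/2\}$. They are pairwise disjoint, since $y\in C_{i}\cap C_{j}$ would force $||x^{(i)}-x^{(j)}||_{\infty}\leq||x^{(i)}-y||_{\infty}+||y-x^{(j)}||_{\infty}<\epsilon$, contradicting $\epsilon$-separation. Since each $x^{(i)}\in[-1,1]^{n}$, every $C_{i}$ is contained in the enlarged cube $[-1-\tfrac{\epsilon}{2},\,1+\tfrac{\epsilon}{2}]^{n}$. Comparing Lebesgue volumes, $k\,\epsilon^{n}=\sum_{i}\mathrm{vol}(C_{i})=\mathrm{vol}\big(\bigcup_{i}C_{i}\big)\leq(2+\epsilon)^{n}$, so $k\leq\big(1+\tfrac{2}{\epsilon}\big)^{n}<\big(2+\tfrac{2}{\epsilon}\big)^{n}$, the strict inequality holding because $\epsilon>0$ and $n\geq1$. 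In particular $k$ is finite, which retroactively legitimates the greedy construction.

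I do not anticipate a genuine obstacle here: this is the textbook packing bound for $\epsilon$-nets. The only point deserving a moment of care is keeping the net inside $X$ itself rather than using a uniform grid on $[-1,1]^{n}$ (a grid would generally not land in $X$); building it as a maximal separated subset of $X$ sidesteps this, and the volume comparison still closes because every $C_{i}$ is centred at a point of $X\subseteq[-1,1]^{n}$ and therefore stays within the slightly enlarged cube.
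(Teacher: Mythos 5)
Your proof is correct, but it takes a genuinely different route from the paper. The paper's argument is a grid covering: set $l=\lceil 1/\epsilon\rceil$, partition $[-1,1]^{n}$ into the $(2l)^{n}$ small hypercubes of side $1/l$, and pick one point of $X$ in each cell that $X$ meets; any $x\in X$ then lies in some cell together with the chosen representative, giving $\|x-x'\|_{\infty}\leq 1/l\leq\epsilon$ and the count $(2l)^{n}<(2/\epsilon+2)^{n}$. You instead take a maximal $\epsilon$-separated subset of $X$ and bound its size by packing disjoint $\ell_{\infty}$-cubes of radius $\epsilon/2$ into $[-1-\epsilon/2,1+\epsilon/2]^{n}$, obtaining $k\leq(1+2/\epsilon)^{n}<(2/\epsilon+2)^{n}$. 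Both constructions correctly keep the net inside $X$, which is what Lemma~\ref{lemma1} requires. The trade-offs: the paper's grid argument is completely elementary (no maximality or volume comparison, only finitely many cells to choose from), while your packing argument needs the small extra step of justifying that a maximal separated set exists (which you handle, since the volume bound caps the size of any separated set, or via Zorn); in return it yields a slightly sharper cardinality bound, $(1+2/\epsilon)^{n}$ versus $(2\lceil 1/\epsilon\rceil)^{n}$, and delivers the strict inequality $\|x-x'\|_{\infty}<\epsilon$ directly from maximality, whereas the grid construction as written only gives $\leq\epsilon$. Either version suffices for the way the lemma is used downstream in Corollary~\ref{lemma3} and Eq.~\eqref{bound0}.
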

\begin{proof}
Let $l=\lceil 1/\epsilon\rceil$ be the smallest integer greater than 
$1/\epsilon$. We can see the whole hypercube as the union of small hypercubes
\begin{equation}
h_{j_{1},...,j_{n}}=\prod_{k=1}^{n}\left[\frac{j_{k}}{l},\frac{j_{k}+1}{l}\right],
\end{equation}
where
$j_{k}\in\{-l,-l+1,...,-1,0,1,2,...,l-1\}$. Note that their edges have length $1/l$, so 
two points
$a,b$ inside them satisfy $||a-b||_{\infty}\leq 1/l=1/\lceil
1/\epsilon\rceil\leq \epsilon$. Whenever $X\bigcap h_{j_{1},...,j_{n}}\neq
\emptyset$ take a single point $x_{j_1,...,j_n}$ of $X$ in that 
intersection, and let $N_{\epsilon}$ be the collection
of such points. It is indeed finite and its number is at most the total number of small 
hypercubes,
$|N_{\epsilon}|\leq (2l)^{n}< (2/\epsilon+2)^{n}.$ Take now arbitrary 
$x\in X$. Since
$X\subseteq [-1,1]^{n}$,  $x$ belongs to some small hypercube 
$h_{j_{1}^{\prime},...,j_{n}^{\prime}}$, for some choice of indexes 
$\{j_{1}^{\prime},...,j_{n}^{\prime}\}$. If we take the single 
$x_{j_{1}^{\prime},...,j_{n}^{\prime}}\in
N_{\epsilon}$ in that hypercube, we have 
$||x-x_{j_{1}^{\prime},...,j_{n}^{\prime}}||\leq \epsilon$, since both $x$ 
and $x_{j_{1}^{\prime},...,j_{n}^{\prime}}$
belongs to $h_{j_{1}^{\prime},...,j_{n}^{\prime}}$.
\end{proof}

\begin{figure}[h]
 \includegraphics[scale=0.48]{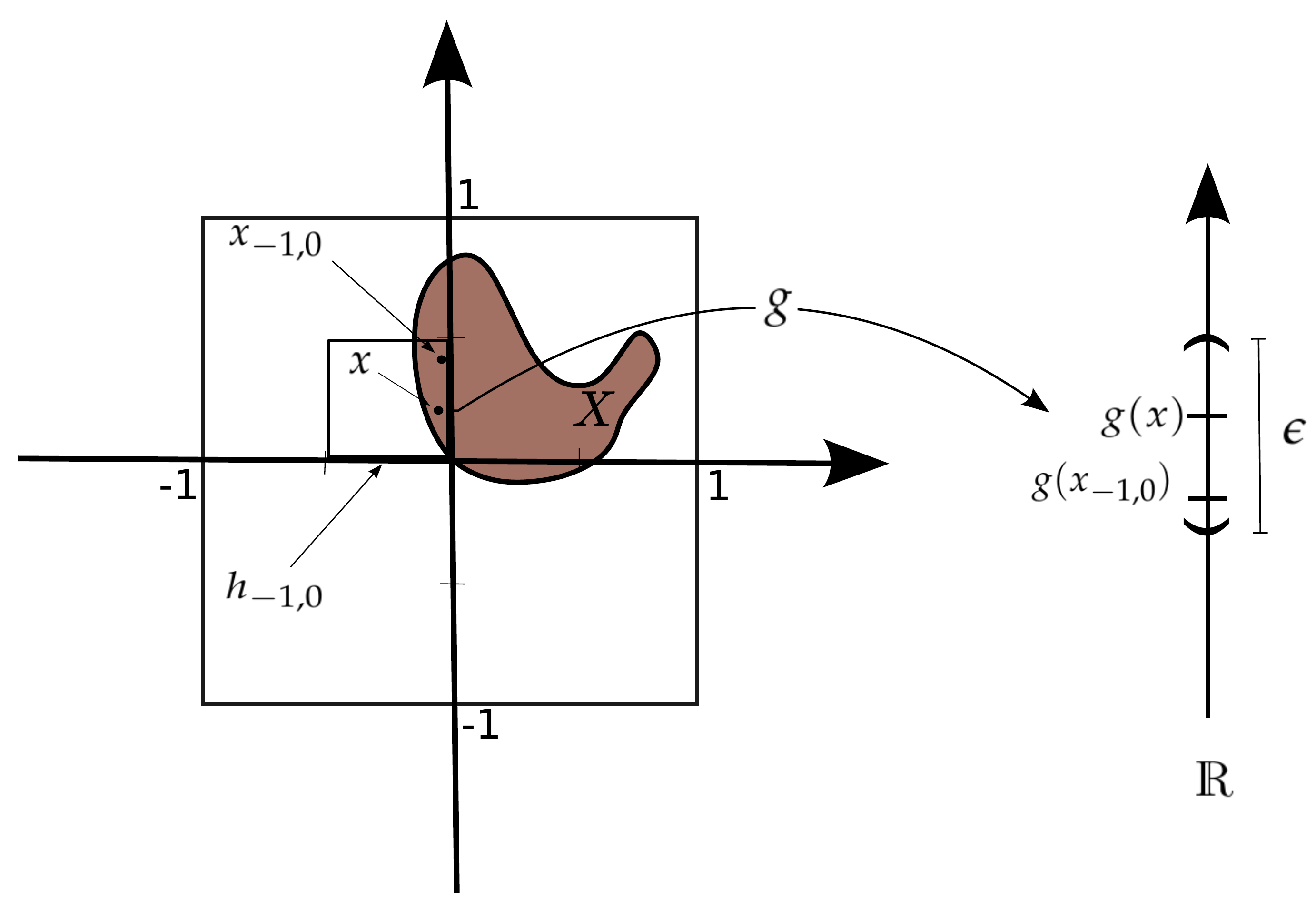}
 \caption{Schematic drawing of $X \subset [-1,1]^{2},$ and a small hypercube 
$h_{-1,0}$ intersecting it with $\epsilon=\frac{1}{2}$. Here we are showing 
an arbitrary choice of a point $x_{-1,0} \in N_{\epsilon}$, given $x \in X$, 
such that $|g(x)-g(x_{-1,0})| < \epsilon$ (colours online).} \label{figurabonita}
\end{figure} 

With Lemma~\ref{lemma2} in our hands, we have immediately (see 
Fig.~\ref{figurabonita}):
\begin{corollary}\label{lemma3}
Suppose that for $g:X\rightarrow \mathds{R}$, where $X\subseteq [-1,1]^{n}$, there 
exists  
$\lambda>0$ such that
$|g(x)-g(x')|\leq \lambda||x-x'||_{\infty}$ for all $x,x'\in X$. 
Given 
$\delta>0$ there exists a set $N_{\delta/\lambda}\subset X$ with
number of elements bounded by $(2\lambda/\delta+2)^{n}$ such that for 
every 
$x\in X$, there is $x'\in N_{\delta/\lambda}$ with $|g(x)-g(x')|< 
\delta$. 
\end{corollary}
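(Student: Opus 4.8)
The plan is to deduce Corollary~\ref{lemma3} directly from Lemma~\ref{lemma2} by a change of scale on the target tolerance. First I would set $\epsilon = \delta/\lambda$. Since we want to apply Lemma~\ref{lemma2}, which requires $0<\epsilon<1$, I would note that there is no loss of generality in assuming $\delta<\lambda$: if $\delta\geq\lambda$ then $X$ itself (being contained in $[-1,1]^n$, hence of $\|\cdot\|_\infty$-diameter at most $2$) can be covered by controlling things through a smaller auxiliary $\delta'$, or one simply observes that any single-point net already works because $|g(x)-g(x')|\leq\lambda\|x-x'\|_\infty\leq 2\lambda$ is not automatically $<\delta$, so the cleaner route is just to take $\delta$ small enough that $\epsilon<1$ and remark that the general case follows since a net that is $\delta'$-fine for $\delta'<\delta$ is a fortiori $\delta$-fine. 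In the regime $0<\delta<\lambda$ we have $0<\epsilon<1$ and Lemma~\ref{lemma2} applies.

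Next I would invoke Lemma~\ref{lemma2} with this $\epsilon$ to obtain a finite subset $N_\epsilon = N_{\delta/\lambda}\subset X$ with the property that for every $x\in X$ there is $x'\in N_{\delta/\lambda}$ satisfying $\|x-x'\|_\infty < \epsilon = \delta/\lambda$, and with cardinality bounded by
\begin{equation}
|N_{\delta/\lambda}| < \left(\frac{2}{\epsilon}+2\right)^n = \left(\frac{2\lambda}{\delta}+2\right)^n,
\end{equation}
which is exactly the claimed bound. The final step is to translate the metric closeness into closeness of $g$-values: for the pair $x, x'$ above, the Lipschitz-type hypothesis gives
\begin{equation}
|g(x)-g(x')| \leq \lambda\,\|x-x'\|_\infty < \lambda\cdot\frac{\delta}{\lambda} = \delta,
\end{equation}
so $N_{\delta/\lambda}$ is the desired finite set.

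There is essentially no obstacle here — this is a one-line rescaling argument — so the only thing to be careful about is the bookkeeping around the constraint $\epsilon<1$ in Lemma~\ref{lemma2}, i.e. the case $\delta\geq\lambda$. The honest fix is to state the corollary (as written) without an explicit smallness assumption on $\delta$ and handle $\delta\geq\lambda$ by applying the result with any $\delta'$ satisfying $0<\delta'<\min\{\delta,\lambda\}$: the resulting net $N_{\delta'/\lambda}$ is then automatically $\delta'$-fine, hence $\delta$-fine, and its cardinality is still bounded by $(2\lambda/\delta'+2)^n$; since in the intended application (the proof of Theorem~\ref{Thm:TheTheorem}) the relevant $\delta$ is small, this edge case is immaterial. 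I would therefore present the main argument for $0<\delta<\lambda$ and add a one-sentence remark covering the degenerate range.
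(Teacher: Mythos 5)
Your argument is correct and is exactly the derivation the paper intends when it states the corollary follows ``immediately'' from Lemma~\ref{lemma2}: set $\epsilon=\delta/\lambda$, use the cardinality bound $(2/\epsilon+2)^{n}=(2\lambda/\delta+2)^{n}$, and convert metric closeness to closeness of $g$-values via the Lipschitz hypothesis. Your worry about the regime $\delta\geq\lambda$ is harmless (and could be dispatched even more directly, since the construction in Lemma~\ref{lemma2} works verbatim for $\epsilon\geq 1$ with $l=\lceil 1/\epsilon\rceil=1$, giving a net of size $2^{n}\leq(2\lambda/\delta+2)^{n}$), so no further work is needed.
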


Finally, we shall use L\'evy's Lemma (see~\cite{MLedoux01}):
\begin{lemma}
For every $\epsilon>0$, $D \geq 1$ integer and $F:S_{D}\rightarrow
\mathds{R}$ a real-valued function with Lipschitz constant $\Lambda$ (w.r.t the 
Euclidean distance),
the following inequality holds true:
\begin{equation}
\mathds{P}(F-\mathds{E}[F]>\epsilon)\leq 2 
e^{-\frac{(D+1)\epsilon^2}{9\pi^3\Lambda^2}},
\end{equation}
where $\mathds{P}$ denotes the uniform probability measure on the $D-$dimensional 
unity 
sphere 
$S_{D}$.
\end{lemma}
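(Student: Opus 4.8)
The plan is to derive this statement, commonly attributed to Paul L\'evy, from the \emph{spherical isoperimetric inequality} together with the comparison between the Euclidean (chordal) and geodesic metrics on $S_{D}$; this is the route taken by Milman and Schechtman and reproduced in Ledoux~\cite{MLedoux01}, and it is what produces the precise constant $9\pi^{3}$.

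First I would reduce the problem to the geodesic metric. For $x,y\in S_{D}\subset\mathds{R}^{D+1}$ the geodesic distance is $d_{g}(x,y)=2\arcsin(|x-y|/2)\geq |x-y|$, so every $F$ that is $\Lambda$-Lipschitz for the Euclidean norm is automatically $\Lambda$-Lipschitz for $d_{g}$. Hence it suffices to prove concentration for geodesically Lipschitz functions; the loss in this comparison, controlled by the ratio $\arcsin(t)/t\le \pi/2$ over the whole sphere, is one of the places where powers of $\pi$ enter the final constant.

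Second, the analytic core is the concentration inequality for sets. Let $m$ be a median of $F$ and put $A=\{F\le m\}$, so $\mathds{P}(A)\geq 1/2$. If $F(x)>m+\epsilon$ then, by the geodesic Lipschitz bound, $x$ lies at geodesic distance at least $\epsilon/\Lambda$ from $A$; writing $A_{r}$ for the $r$-enlargement of $A$, this gives the inclusion $\{F>m+\epsilon\}\subseteq S_{D}\setminus A_{\epsilon/\Lambda}$, whence $\mathds{P}(F>m+\epsilon)\le 1-\mathds{P}(A_{\epsilon/\Lambda})$. The spherical isoperimetric inequality of L\'evy and Schmidt states that, among Borel sets of a given measure, geodesic caps minimize the measure of every enlargement; comparing $A$ with a half-sphere cap therefore bounds $1-\mathds{P}(A_{r})$ by the measure of the complementary cap of angular radius $r$. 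This last quantity is $\int_{r}^{\pi/2}\sin^{D-1}\theta\,d\theta\big/\int_{0}^{\pi/2}\sin^{D-1}\theta\,d\theta$, and estimating it with $\sin\theta\le\theta$ (together with a standard lower bound on the normalizing integral) yields a Gaussian bound of the form $C\,e^{-(D+1)r^{2}/\kappa}$. Feeding $r=\epsilon/\Lambda$ back in produces a one-sided tail about the median.

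Third I would pass from the median to the mean. Integrating the two-sided tail bound just obtained shows $|\mathds{E}[F]-m|$ is at most a constant multiple of $\Lambda/\sqrt{D+1}$, which is negligible on the scale $\epsilon$ at which concentration occurs; absorbing this shift, the cap constant $C$, and the geodesic-versus-Euclidean factor into a single explicit constant gives the stated inequality with prefactor $2$ and denominator $9\pi^{3}\Lambda^{2}$. An alternative, and in some ways cleaner, route avoids the median entirely: since $S_{D}$ has Ricci curvature bounded below by $D-1$, the Bakry--\'Emery criterion yields a logarithmic Sobolev inequality, and Herbst's argument bounds the moment generating function $\mathds{E}[e^{\lambda(F-\mathds{E}F)}]$ directly, producing a sub-Gaussian tail about the mean through Markov's inequality. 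I expect the genuine obstacle to be not the concentration mechanism --- which is classical either way --- but the \emph{bookkeeping of constants}: obtaining exactly $(D+1)$ in the numerator and $9\pi^{3}$ in the denominator requires care in the cap-measure estimate, in the metric comparison, and in the median-to-mean correction, and for a result of this standard type it is entirely appropriate simply to invoke the statement from~\cite{MLedoux01}.
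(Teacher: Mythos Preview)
The paper does not prove this lemma at all: it is introduced with ``Finally, we shall use L\'evy's Lemma (see~\cite{MLedoux01})'' and then simply stated, with the reference serving as the entire justification. Your outline --- isoperimetry on the sphere, concentration about the median via cap estimates, then the median-to-mean correction --- is the standard derivation and is correct as a sketch; you even anticipate the paper's move by noting at the end that it is ``entirely appropriate simply to invoke the statement from~\cite{MLedoux01}.'' So your proposal is not wrong, it is just more than what the paper does: the paper treats the lemma as a black-box citation, and you have supplied the contents of the box.
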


\section{Putting all together}\label{sec:PuttingAllTog}

Now, take a function $\varphi$ as in Lemma~\ref{lemma1}. Assume that 
$M=[-1,1]^{n}$ and there exists a
$\lambda>0$ such that $|\varphi(\omega,x)-\varphi(\omega,x')|\leq 
\lambda||x-x'||_{\infty}$ 
for all $\omega\in\Omega$ and $x,x'\in [-1,1]^{n}$. Given $c>\delta>0$, we 
can use Lemma~\ref{lemma1} and Corollary~\ref{lemma3} to 
show that:
\begin{equation}
 \mathds{P}\left(\omega:\sup_{x\in[-1,1]^{n}}\{\varphi(\omega,x)\}>c\right)\leq 
(2\lambda/\delta+2)^{n}\max_{x'\in 
N_{\lambda/\delta}}\mathds{P}\left(\omega:\varphi(\omega,
x')>c-\delta\right). 
\label{bound0}
\end{equation}

Let us assume additionally that $\Omega=S_{D}$ for some $D\in\mathds{N}$, that 
$\mathds{E}[\varphi(\cdot,x)]\leq 1$ for all $x\in M$, and that 
there exists non-negative $\Lambda \in \mathds{R}$ such that 
\begin{equation}
|\varphi(\omega,x)-\varphi(\omega',x)|\leq \Lambda|\omega-\omega'|, \,\, \mbox{for all} 
\,\, x\in 
[-1,1]^{n},\omega,\omega'\in S_{D}, 
\end{equation}
with 
$|\omega-\omega'|$ being the Euclidean distance on the sphere. That is, $\varphi$ 
is
Lipschitz with with respect to $\omega$, with Lipschitz constant $\Lambda$ independent 
of $x$. These assumptions
allow us 
to apply L\'evy's Lemma, and bound the probability appearing on the 
\emph{r.h.s} 
of Equation~\eqref{bound0} in order to get, for all $c>\delta+1$, the bound:
\begin{equation}
 \mathds{P}\left(\omega:\sup_{x\in[-1,1]^{n}}\{\varphi(\omega,x)\}>c\right)\leq 
2(2\lambda/\delta+2)^{n} 
e^{-\frac{(D+1)(c-\delta-1)^2}{9\pi^3\Lambda^2}}.
\label{bound1}
\end{equation}
We now come back to the setting of Theorem~\ref{Thm:TheTheorem}. We apply 
\eqref{bound1} to the events in Eq.~\eqref{eq:EqSimple}. We set $\varphi=Q$, 
$\omega=\ket{\psi}$ and $x=(T,A)$, that is, $M=\mc{T}\times\mc{A}$. The set  
of pure states $\ket{\psi}$ is a sphere $S_{D}$ with $D=2d^{N}-1$. Moreover, an 
upper bound for 
the mean value
of 
$Q$ follows straightforwardly from the fact that the  
expectation $\mathds{E}[Q(\ket{\psi},T,A)]$
corresponds to replacing $\ket{\psi}$ with the maximally mixed state, with which the Bell 
inequality is certainly satisfied:
\begin{equation}
 \mathds{E}[Q(\ket{\psi},T,A)] \leq 1.
\end{equation}
In order to apply our tools, we will parametrize 
$\mc{T}\times\mc{A}$ by a subset of $[-1,1]^{n}$, for some $n$, and find $\lambda$ and 
$\Lambda$ as 
described 
above. We will then obtain the bound:
\begin{equation}\label{eqchave}
\mathds{P}\left[\ket{\psi}:\sup_{\substack{A \in 
\mc{A} \\ T \in \mc{T} }}  Q(\ket{\psi},T,A)  
>c\right]\leq 4(2\lambda/\delta+2)^{n} 
e^{-\frac{(2d^{N})(c-\delta-1)^2}{9\pi^3\Lambda^2}}.
\end{equation}

In the following subsections we $i)$ show the existence, and $ii)$ obtain an 
estimative for the values of $n,\lambda,\Lambda$, which, inserted on the 
above expression, provides the proof of Theorem~\ref{Thm:TheTheorem}.

\subsection{Bounding the Lipschitz constant of $Q$ (estimating $\Lambda$)}
\label{subsec:BoundingLipschitz}
In order to show that the functional $Q$ is Lipschitz with respect to $\ket{\psi}$, 
firstly note that:
%
%
\begin{subequations}
\begin{align}
\forall \ket{\psi},\ket{\psi^{\prime}}: \vert 
Q(\ket{\psi},T,A) - Q(\ket{\psi'},T,A) \vert 
&= \vert \bra{\psi} 
\mathcal{B}_{T,A}\ket{\psi} - \bra{\psi^{\prime}} 
\mathcal{B}_{T,A}\ket{\psi^{\prime}} \vert  \label{eqlip1}  \\
&= \left| \text{Tr} \left[ \left( \sum_{\substack{j_1,...,j_N \\ 
o_1,...o_N}}T_{\substack{j_1,...,j_N \\ 
o_1,...o_N}} \bigotimes_{i=1}^{N} \Pi_{j_i,o_i}^{i}\right) 
(\ket{\psi}\bra{\psi}-\ket{\psi^{\prime}}\bra{\psi^{\prime}}) \right]  
\right| \label{eqlip3}\\
&\leq \left\|  \sum_{\substack{j_1,...,j_N \\ 
o_1,...o_N}}T_{\substack{j_1,...,j_N \\ 
o_1,...o_N}} \bigotimes_{i=1}^{N} \Pi_{j_i,o_i}^{i} \right\|    
         \left\|    
\ket{\psi}\bra{\psi}-\ket{\psi^{\prime}}\bra{\psi^{\prime}}   \right\|_{1} \label{eqlipc} \\
& =\left\| \mathcal{B}_{T,A} \right\|\left\|    
\ket{\psi}\bra{\psi}-\ket{\psi^{\prime}}\bra{\psi^{\prime}}   \right\|_{1} 
\label{eqlip4} \\
&\leq 2(2m-1)^{N-1} \left\| 
\ket{\psi}-\ket{\psi^{\prime}}   
\right\|_{2}. \label{eqlip5}
\end{align}
\end{subequations}
Here $\Vert \mathcal{B}_{T,A} \Vert$ stands for the usual operator norm, and 
 \eqref{eqlipc} follows from von Neumann's trace inequality~\cite{Mirsky75} 
plus H\"older's inequality. The bound for the Bell operator 
in~\eqref{eqlip5} follows from the fact that it is self-adjoint and Eq.~\eqref{cotaloubenets}. 
Therefore, a 
Lipschitz 
constant for $Q$ is 
\begin{equation}
\Lambda = 2(2m-1)^{N-1}.
\end{equation}

\subsection{Variation of $Q$ with $\mc{A}$ and $\mc{T}$ (finding $n$ 
and $\lambda$)}
\label{subsec:FindingNandLambda}

\subsubsection{Variation of $Q$ with measurements $A$}

Now we proceed to estimate how the function $Q(\psi,T,A)$ varies when we change the local operators 
describing the measurements at each site, for fixed linear functional $T$. We will describe this set 
of operators as a subset of an appropriate hypercube. 

Given two families 
$A,\tilde{A} \in \mc{A}$ of local operators, one has:
\begin{subequations}
\begin{align}
|Q(\psi,T,A) - Q(\psi,T,\tilde{A}) | & = | \bra{\psi} 
\mathcal{B}_{T,A}\ket{\psi} - \bra{\psi} 
\mathcal{B}_{T,\tilde{A}}\ket{\psi} | \label{deltaA2} \\
&= \left| \sum_{\substack{j_1,...,j_N \\ 
o_1,...,o_N}} T_{\substack{j_1,...,j_N \\ 
o_1,...,o_N}} \bra{\psi} \bigotimes_{k=1}^{N}(\Pi_{j_k,o_k}^{k} - 
\tilde{\Pi}_{j_k,o_k}^{k}) \ket{\psi} \right| 
\label{deltaA3} \\
& \leq \sum_{\substack{j_1,...,j_N \\ 
o_1,...,o_N}} \left| T_{\substack{j_1,...,j_N \\ 
o_1,...,o_N}} \right| N \sup_{k} \Vert \Pi_{j_k,o_k}^{k} - 
\tilde{\Pi}_{j_k,o_k}^{k} \Vert    \label{deltaA4} \\
& \leq bN(mv)^{N} \sup_{k,j_k,o_k} \Vert \Pi_{j_k,o_k}^{k} - 
\tilde{\Pi}_{j_k,o_k}^{k} \Vert. \label{deltaA5}
\end{align}
\end{subequations}
Note that in \eqref{deltaA4} we use that all POVM elements have norm bounded by one. In
\eqref{deltaA5} we use the fact that $\left| T_{j_1,...,j_N |
o_1,...,o_N} \right| \leq b$, for all $j_1,...j_N$ and 
$o_1,...,o_N$. 

Now, defining $D(A,\tilde{A})= \sup_{k,j_k,o_k} 
\Vert 
\Pi_{j_k,o_k}^{k} - 
\tilde{\Pi}_{j_k,o_k}^{k} \Vert$, we have:
\begin{equation}\label{eqlipparaQ1}
|Q(\psi,A,T) - Q(\psi,\tilde{A},T)| \leq 
bN(mv)^{N}D(A,\tilde{A}).
\end{equation}

Recall that the set of POVM operators in a given local Hilbert space has
(real) dimension $d^{2}$, the same for the set of Hermitian operators acting 
on that space. Indeed, if we fix an orthonormal basis
$\{\ket{i}\}_{i=1}^{d}$, we can write a POVM element
 as $\Pi=\sum_{i,j=1}^{d}\alpha_{ij}\ket{i}\bra{j}$ for some 
complex coefficients $\alpha_{ij}$. Collecting the real and imaginary parts 
of these coefficients, and using that $\Pi$ is self-adjoint, we get 
$d^{2}$ independent real parameters, which we can align on a 
$d^{2}-$dimensional vector $(r_1,...,r_{d^2})$, where $|r_{i}|\leq 1$ for 
$i=1,...,d^{2}$, since $0 \leq \Pi\leq I$. Describing each POVM element 
$\Pi^{k}_{j_{k},o_{k}}$ in this way, we will have a total of $d^{2}mvN$ 
real parameters taking values in the interval $[-1,1]$. In this way, we have defined a
mapping that takes each $A$ to a vector 
$\vec{r}(A)\in [-1,1]^{d^{2}mvN}$ in the $d^{2}mvN$-dimensional hypercube. We can moreover define a norm 
$\Vert \Pi \Vert_{\text{max}}=\max_{i\in 
[d^{2}]}\{ | r_i | \}$ such that:
\begin{equation}\label{normas}
 \Vert \Pi \Vert \leq 2d^{2}\Vert \Pi \Vert_{\text{max}}, \,\, \forall 
\Pi,
\end{equation} 
and the distance $D_{\text{max}}^{\mc{A}}(A,\tilde{A})= 
\sup_{k,j_k,o_k} \Vert \Pi_{j_k,o_k}^{k} - \tilde{\Pi}_{j_k,o_k}^{k} 
\Vert_{\text{max}}$, so 
that $D_{\text{max}}^{\mc{A}}(A,\tilde{A})=||\vec{r}(A)-\vec{r} 
(\tilde{A})||_{\infty}$ hence 
\begin{equation}\label{normas}
D(A,\tilde{A})\leq 2d^2 D_{\text{max}}^{\mc{A}}(A,\tilde{A}).
\end{equation}  
Finally, from~\eqref{normas} and \eqref{eqlipparaQ1}, we obtain the following
inequality:
\begin{equation}\label{ineqrede}
 |Q(\psi,A,T) - Q(\psi,\tilde{A},T)| \leq 
2bN(mv)^{N}d^{2}D_{\text{max}}^{\mc{A}}(A,\tilde{A}).
\end{equation}

\subsubsection{Variation of $Q$ with coefficients $T$}
\label{subsec:FindingConstantForTf}

We now estimate how much the function $Q$ changes when one 
varies coefficients in $\mc{T}$, \emph{i.e.} fixed a pure state 
$\ket{\psi}$, and a choice of local measurements $A \in 
\mc{A}$ we would like to know the behaviour of $|Q(\psi,T,A) - 
Q(\psi,\tilde{T},A)|$ with $T$ and $\tilde{T}$. One has:

\begin{align}
|Q(\ket{\psi},T,A) - Q(\ket{\psi},\tilde{T},A) |  &= | \bra{\psi} 
\mathcal{B}_{T,A}\ket{\psi} - \bra{\psi} 
\mathcal{B}_{\tilde{T},A}\ket{\psi} | \label{deltaB2} \\
&= \left| \sum_{\substack{a_1,...,a_N \\ 
x_1,...,x_N}} \left(  T_{\substack{a_1,...,a_N \\ 
x_1,...,x_N}} - \tilde{T}_{\substack{a_1,...,a_N \\ 
x_1,...,x_N}} \right) \bra{\psi} \bigotimes_{k=1}^{N}\Pi_{a_k,x_k}^{k} \ket{\psi} 
\right| 
\label{deltaB3} \\
& \leq \sum_{\substack{a_1,...,a_N \\ 
x_1,...,x_N}} \left| T_{\substack{a_1,...,a_N \\ 
x_1,...,x_N}} - \tilde{T}_{\substack{a_1,...,a_N \\ 
x_1,...,x_N}}  \right|   \label{deltaB4} 
\\
& \leq (mv)^{N} \Vert T - 
\tilde{T} \Vert_{\max}\\
&=b(mv)^N D_{\text{max}}^{\mc{T}}(T,\tilde{T}),
\label{deltaB5} 
\end{align}
introducing $D_{\text{max}}^{\mc{T}}(T,\tilde{T}):=\frac{1}{b}||T-\tilde{T}||_{\text{max}}$ so that 
we can see the elements of $\mc{T}_{b}$ as a subset of the hypercube $[-1,1]^{(mv)^{N}}$.

\subsubsection{Finding $n$ and $\lambda$}

Finally, returning to~\eqref{eqchave}, we can set then $M=\mc{T}\times 
\mc{A}\subset [-1,1]^{d^2mvN}\times[-1,1]^{(mv)^N}=[-1,1]^{d^2mvN+(mv)^N}$, with the following 
notion of distance:
$$\text{Dist}[(T,A),(T',A')]:=\text{max}\{D_{\text{max}}^{\mc{A}}(A,A'),D_{\text{max}}^{\mc{T}}(T,
T') \} .
$$ 
We immediately see that we can define $n=d^2mvN+(mv)^N$. Moreover, we have:
\begin{align}
|Q(\ket{\psi},T,A) - Q(\ket{\psi},T',A')|&\leq |Q(\ket{\psi},T,A) - 
Q(\ket{\psi},T,A')|+|Q(\ket{\psi},T,A') - Q(\ket{\psi},T',A')|\\
&\leq 2bN(mv)^{N}d^{2}D_{\text{max}}^{\mc{A}}(A,A')+ b(mv)^{N}D_{\text{max}}^{\mc{T}}(T,T')\\
&\leq 4bN(mv)^{N}d^{2}\text{Dist}[(T,A),(T',A')].
\end{align}
The last inequality allows us to set $\lambda=4Nb(mv)^Nd^2$.


\section{Conclusion}
\label{sec:Conclusion}

We have shown that for any fixed correlation scenario $\Gamma=(N,m,v)$ there exists an upper 
bound (see Eq.~\ref{eq:MainInequality}) for the typical 
violation that an $N-$partite pure state with local dimension $d$ can exhibit. In particular, we 
have proved that, if the local dimension $d$ is large enough relative to the complexity of the Bell 
scenario~\cite{BrietVidick13,PGWPVJ08,PalazuelosYin15}, then significant violations become extremely 
rare as $N$ increases. More precisely, given a correlation scenario $\Gamma=(N,m,v)$, if 
\begin{equation}
\frac{d}{m(2m-1)^{2}} > v, 
\label{eq:DependenceLocalDimension}
\end{equation}
then the probability of finding any significant violation of a Bell inequality whose 
coefficients are uniformly bounded~\footnote{An apparently very drastic behaviour like 
$M_{\Gamma}=\mc{O}((mv)^{N})$ is already enough for our purposes.} is extremely small for any 
$\delta>0$. That is: 
\begin{equation}\mathds{P}(V_{\text{\normalfont{opt}}} \geq 1+\delta)\rightarrow 0, \,\,\, 
\mbox{as} \,\, N \rightarrow \infty, \end{equation}
super-exponentially fast. This generalizes previous results by two of the present 
authors \cite{DO12}. In addition, also surprisingly, we also have shown that if the 
number of parts is greater than two, then as the local dimension $d$ goes to infinity, 
we also have the same behaviour for the probability of finding any significant 
violation, \emph{i.e.}: 
\begin{equation}
\mathds{P}(V_{\text{\normalfont{opt}}} \geq 1+\delta)\rightarrow 0, \,\,\, \text{as } d \rightarrow 
\infty. 
\end{equation}
We note that the typicality 
arguments used here and in \cite{DO12} are essential to several other results in 
quantum information 
theory~\cite{Palazuelos15,GHJPV2015,GLPV17,MEMetal15,BHH16,PSW06,BrietVidick13}) 

Remarkably, our result stands in contrast with the fact that  
entanglement becomes typically large in the same limits of $N\to \infty$ or $d\to \infty$ we 
are taking. This further suggest that entanglement, though necessary, is not sufficient to explain 
non-local effects in quantum physics.

We now argue that the dependance of our result on the local dimension $d$ is an essential feature of this kind of problem. In \cite{GHJPV2015} the authors showed that, for correlation scenarios $\Gamma=(2,m,2)$, the typical behaviour of local correlations can be quite different depending on the value of $d/m$. Depending on how this ratio behaves, one may obtain 
(asymptotically in $m$) that correlation matrices do or do not display quantum effects. On the 
other hand, in our setting we are optimizing over all possible 
$POVM$'s. Therefore, it should be expected that in any contextuality scenario high 
violations of contextuality inequalities might be typical. If true, this is another significant difference between these two distinct
situations~\cite{CSW10,CSW14,Amaral14,RDLTC14}.   

Our approach is very general, in that it 
encompasses a large class of Bell inequalities. However, if arbitrary coefficients are allowed, it remains open whether a similar result would hold. Based on recent work~\cite{GLPV17} we believe that for general 
correlation scenarios $\Gamma=(N,m,v)$ it is possible to get rid of the uniformly 
boundedness requirement.  

To conclude, we point out that our result also has implications in the 
context of the classical-to-quantum transition problem. As discussed firstly by 
Pitowsky~(see~\cite{Pitowsky91}), there is an apparent contradiction between the fact that, on the one hand, multipartite pure
 quantum states rarely admit LHV models, since they always violate some Bell 
inequality~\cite{Gisin91}; and, on the other hand,
the fact that in the macroscopic world such models are actually the rule. One way out of this is to invoke decoherence and claim that actual macroscopic systems should be 
described by highly mixed states, so that non-local correlations are not visible. Our result 
allows us to explain the same phenomenon from the (static) perspective of experimental feasibility. Indeed, even if a pure 
state is a good description for the macroscopic system, a typical state requires very intricate and sharp Bell test experiments in order to detect non-local 
correlations. This seems to be inconceivable in practice~\cite{Aspect75,Delft-free-15,Delft-free-16}.


\begin{acknowledgments}
We would like to thank Marco T\'ulio de Quintino, Gabriela Lemos, Rafael Chaves, 
Barbara Amaral and Marcelo Terra 
Cunha for useful comments and stimulating
discussions. CD thanks the International Institute 
of Physics for its support and hospitality. CD and RD thank CNPq, CAPES and FAPEMIG 
for financial support during the execution of this project. RO aknowledges the 
financial support of
Bolsa de Produtividade  em Pesquisa
from CNPq.
\end{acknowledgments}


\appendix

\section{An upper bound for the typical violation}
\label{sec:cotaprob}

Theorem~\ref{Thm:TheTheorem} gives us an upper bound for the typical 
violation for any Bell Scenario under consideration. In this appendix we will 
show how to rearrange and bound the terms present in the 
Eq.~\eqref{eq:MainInequality} in order to obtain a more reasonable inequality 
for our purposes.

Indeed:
\begin{subequations}
 \begin{align}
 \mathds{P}(V_{\text{\emph{opt}}}>c)& \leq 4  
 \left[\frac{8bN(mv)^Nd^2}{\delta}+2\right]^{mvNd^{2}+(mv)^N} \times 
e^{-\left(\frac{2d^{N}(c-\delta-1)^2}{36\pi^{2}(2m-1)^{2N-2}}\right)}, \label{eqdisc1} 
\\
& \leq 4e^{\left( mvNd^2 + (mv)^{N} \right)\log\left( 
\frac{16bN(mv)^{N}d^{2}}{\delta}
\right)}\times 
e^{-\left(\frac{2d^{N}(c-\delta-1)^2}{36\pi^{2}(2m-1)^{2N-2}}\right)} \label{eqdisc2} 
\\
& = 4 \exp \left\lbrace mvN^{2}d^{2}\log(mv) 
+mvNd^{2}\log\left(\frac{16bNd^{2}}{\delta} 
\right)  \right. + (mv)^{N}\log\left(\frac{16Nd^{2}}{\delta}\right) + 
N(mv)^{N}\log(bmv)
\nonumber \\
& \left.     
-\frac{(c-\delta-1)^{2}(2m-1)^2}{18\pi^{2}}\left[\frac{d} { (2m-1)^2 
} 
\right]^{N}\right\rbrace. \label{cotaajeitada}
 \end{align}
\end{subequations}
In the second term of~\eqref{eqdisc2} we used the bound 
$\frac{8bNd^2(mv)^N}{\delta}+2 \leq \frac{16Nd^2(mv)^N}{\delta}$, since 
$\delta$ is small, and $Nd^2(mv)^N>1$. In subsequent equations, we  
just rewrote some terms in order to facilitate comparison.

\bibliography{biblio}
\end{document}